\def\it{\textit}
\def\bf{\textbf}
\def\ul{\underline}
\def\mb{\mathbf}
\def\Q{\mathrm{Q}}
\def\bb{\mathbb}
\def\mc{\mathcal}
\def\ds{\displaystyle}
\begin{document}

\mainmatter

\title{Expressibility at the machine level versus structure level:
ESO universal Horn Logic and the class \bf{P}}

\titlerunning{Machine level and structure level separation}

\author{Prabhu Manyem}

\institute{Department of Mathematics,
Shanghai University,
Shanghai 200444, China}

\maketitle

\begin{abstract}
We show that ESO universal Horn logic (existential second logic where the
first order part is a universal Horn formula) is insufficient to capture
\bf{P}, the class of problems decidable in polynomial time. This statement is
true in the presence of a successor relation in the input
vocabulary. We provide two proofs --- one based on reduced products of two
structures, and another based on approximability theory (the second proof
is under the assumption that \bf{P} is not the same as \bf{NP}). ~We show that
the difference between the results here and those in \cite{gradel91},
is due to the fact that the expressions this paper deals with are
at the ``structure level", whereas the expressions in \cite{gradel91}
are at the ``machine level" --- a case of \it{Easier done than
said}.
\end{abstract}

%

\it{Keywords}: Descriptive complexity, Optimization, Existential second order logic,
Machine level and structure level.

ACM subject classification: F.1.3, F.4.1.

\section{Introduction}

We work with decision problems derived from optimization problems.
The reader is assumed to have some background in Finite Model
Theory.
If not, the book by Ebbinghaus and Flum \cite{EF99} serves as a good
introduction.
Publications \cite{gradelAnd7others} and \cite{immerman} are also
relevant material for this line of research.
Our main result is the following theorem:
\begin{theorem}
Existential second order logic with a first order part that is
universal Horn (or simply, ESO universal Horn) is insufficient to capture
the class \bf{P}, the class of problems decidable in polynomial time,
even assuming that the input structures come with a successor relation.
$\hfill \Box$
\label{thm:esoHierarchy}
\end{theorem}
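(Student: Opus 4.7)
The plan is to produce two independent separations, as the paper advertises, by exhibiting a single decision problem $\Pi$ that sits in \textbf{P} but that, at the structure level, cannot be defined by any ESO universal Horn sentence over a vocabulary containing a successor relation. The delicate point throughout is to neutralise Gr\"adel's machine-level theorem: the successor lets one number tape cells and time steps, so I must argue in a way that forbids using the input structure merely as a scratchpad for simulating a Turing machine.

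For the model-theoretic proof I would rely on the classical preservation property that universal Horn formulas are preserved under direct products, adapted here to a reduced product on finite structures. The steps are: (i) fix a candidate sentence $\Phi = \exists \bar R \, \forall \bar x \, \psi$ with $\psi$ universal Horn; (ii) take two ``yes'' instances $\mathcal{A}, \mathcal{B}$ of the chosen problem $\Pi$ and define a reduced product $\mathcal{A} \otimes \mathcal{B}$ on which the witnessing interpretations of $\bar R$ can be combined coordinate-wise into a witness for $\Phi$; (iii) exhibit $\Pi \in \textbf{P}$ for which the yes-set is provably not closed under this product, producing a contradiction. The subtlety is handling the successor: one cannot directly multiply successor relations, so the reduced product must be built on the ``data'' part of the vocabulary while the successor is reinterpreted on the product domain in a way that keeps the Horn witnesses intact without accidentally re-enabling machine-level coding.

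For the approximability proof (under $\textbf{P}\neq\textbf{NP}$) the plan is to place every optimization problem whose feasibility predicate is ESO universal Horn into a syntactically defined approximation class in the spirit of Papadimitriou--Yannakakis and Kolaitis--Thakur, so that membership implies a constant-factor approximation algorithm. Then I would select a problem $\Pi \in \textbf{P}$ whose natural optimization version is known to be inapproximable within any constant factor unless $\textbf{P} = \textbf{NP}$; the decision version of $\Pi$ is then polynomial-time decidable but cannot be captured by ESO universal Horn at the structure level, since such a capture would, via the syntactic normal form, yield the forbidden approximation.

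The main obstacle in both arguments is the same and is precisely the point the paper flags: making the ``structure level versus machine level'' distinction mathematically rigorous. I expect the technical heart to be the following claim, which must be established before either closure or approximation reasoning can be applied: when $\bar R$ is required to witness $\Phi$ \emph{relative to the given input structure} (rather than relative to an internally constructed tape/time encoding), the Horn shape of $\psi$ forces a monotone, least-fixed-point-like semantics on $\bar R$ that is too rigid to encode an arbitrary polynomial-time computation. Pinning down this rigidity --- and then turning it into either non-closure under reduced products or provable approximability --- will be the crux.
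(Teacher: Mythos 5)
Your first argument follows the same route as the paper's first proof (preservation of universal Horn formulas under products), but everything that carries the mathematical weight is left as a placeholder. The paper instantiates it concretely: the problem is Maximum Matching (given $G$ and $K$, is there a matching $F$ with $|F|\ge K$?); the two yes-instances $\mathbf{A}$ and $\mathbf{B}$ are taken on the \emph{same} universe $\{0,1,\dots,2\theta\}$ with the \emph{same} successor, the same path edge set $\{(i-1,i)\,:\,1\le i\le 2\theta\}$ and $K=\theta$, differing only in the chosen interpretation of the quantified relation $F$ (the two alternating matchings of size $\theta$); and the ``reduced product'' $\mathbf{A}.\mathbf{B}$ is simply the structure on that same domain interpreting every symbol as the intersection of its two interpretations. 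Because $\mathbf{A}$ and $\mathbf{B}$ agree on all input symbols, the successor difficulty you flag does not arise in the paper at all: intersection leaves successor and edges untouched while emptying $F$, and the paper then appeals to Hodges' preservation of Horn formulas under products and declares Property 1 false in $\mathbf{A}.\mathbf{B}$. Note, however, that your deferred step (iii) --- ``exhibit $\Pi\in\mathbf{P}$ for which the yes-set is provably not closed under this product'' --- is precisely where all the difficulty sits: for an \emph{existential} second-order sentence, showing the product is a no-instance requires ruling out \emph{every} candidate witness on the product structure, not merely observing that the coordinate-wise combination of the two chosen witnesses fails (in the paper's own example the product still contains the full path and $K=\theta$). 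Your plan does not say how you would discharge this, so as it stands the first argument is a skeleton rather than a proof.

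Your second argument departs from the paper and has two concrete gaps. First, ESO universal Horn definability of the \emph{feasibility} predicate does not place a problem in a Papadimitriou--Yannakakis or Kolaitis--Thakur style constant-approximable class: those classes constrain the syntactic shape of the objective, and Horn-definable feasibility alone carries no constant-factor guarantee, so the ``syntactic normal form yields the forbidden approximation'' step has no theorem behind it. Second, your selection criterion is vacuous: if the decision version of $\Pi$ (with threshold $K$ part of the input) is in $\mathbf{P}$, the optimum value is polynomial-time computable, so its optimization version cannot be constant-factor inapproximable unless $\mathbf{P}=\mathbf{NP}$ already fails. The paper's second proof avoids both issues by working with the two \emph{approximation decision problems} for Vertex Cover: $A^T_{vc}$ (ratio at most $2$, in $\mathbf{P}$ via the classical $2$-approximation) and $A^B_{vc}$ (ratio at most $1.3$, \textbf{NP}-complete by Dinur--Safra). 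It argues that a structure-level ESO universal Horn sentence for $A^T_{vc}$ would split into a feasibility conjunct plus a ratio conjunct, so that replacing the constant $2$ by $1.3$ yields an ESO universal Horn sentence for $A^B_{vc}$; Gr\"adel's tractability direction (ESO universal Horn implies \textsc{ptime} decidability, the paper's Theorem 3) then contradicts $\mathbf{P}\neq\mathbf{NP}$. In short, the paper exploits the gap between a proven upper bound and a proven lower bound for a single problem together with a constant-substitution that preserves the Horn form, not a general ``Horn feasibility implies approximability'' principle. Finally, the ``monotone, least-fixed-point rigidity'' claim you anticipate as the technical heart appears nowhere in the paper's argument; its machine-level versus structure-level distinction is made only through vocabulary-based definitions, not through any semantic rigidity theorem, so you should not plan on such a lemma being available.
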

We provide two proofs of Theorem \ref{thm:esoHierarchy}: (a) one based on reduced
products, and (b) a second proof based on approximability theory (this
proof assumes that \bf{P} $\neq$ \bf{NP}).

We first introduce some notation and definitions.

\begin{definition}
\bf{ESO universal Horn} is defined as existential second
order (ESO) logic where the first order part is a universal Horn formula.
$\hfill \Box$
\end{definition}

\begin{problem}
\bf{Maximum Matching}.
Given an undirected graph $G = (V, E)$ and a non-negative integer $K$, is
there a matching in $G$ such that the number of matched edges is at least
$K$?
Assume that $G$ has (i) no self-loops; and (ii) at most one edge between
any pair of vertices.
$\hfill \Box$
\label{def:matching}
\end{problem}

\begin{remark}
(\bf{A note about K}).
~Readers may ask, ``How is $K$ represented in the input? In binary or
unary?".
This may be relevant for problems with large data, such as
Maximum Flow or Minimum Cost Flow, where edge weights are part of the input.
However, the issue is irrelevant for problems such as Maximum Matching where
the solution value lies in the $[0, t]$ range, where $t = \lfloor n/2
\rfloor$ and $n$ is the number of vertices.  If $K > t$, then the
solution is 
immediately disqualified --- there cannot be a solution to the
Maximum Matching 
problem with the number of matched edges higher than $t$.
If $K \le t$, the input size for $K$ is at most that of $t$, whether in
binary or unary.
$\hfill \Box$
\end{remark}

For simplicity, assume the following for Problem \ref{def:matching}:
\begin{assum}
~~$K \le n$, where $n$ is the number of vertices in the input graph of
the problem instance.  Also,
throughout this article, $K$ is a part of the input instance; it can vary
from instance to instance; it is \it{not} a fixed constant over all
instances.
$\hfill \Box$
\label{ass:KleqN}
\end{assum}

\subsection{BFC and OFC}\label{sec:bfcOfc}

We represent decision versions of optimization problems as a conjunction
of a single objective function constraint (OFC) and a set of basic
feasibility constraints (BFC).
~The motivation for this comes from a general mathematical programming
framework where optimization problems are expressed in the form:
\begin{equation}
\begin{array}{rl}
& \mbox{Maximize (or Minimize) $f(\mb{x})$  (the \it{objective
function})}, \\[2mm]
& \mbox{subject to the following constraints:} \\[2mm]

& \mbox{$\ds g_i(\mb{x}) \le 0$, ~ $1 \le i \le m$}, \\[1mm]
& \mbox{$\ds h_j(\mb{x}) = 0$, ~ $1 \le j \le p$}, \\[2mm]

& \mbox{and $\mb{x} \in X$ (for example, $X = \bb{R}^n$)}.
\end{array}
\end{equation}
Additionally, for the decision problem, we are given a number 
$K$ as a part of the input.

Above, the BFC comprises the $m+p$ constraints
$g_i(\mb{x}) \le 0$ and $h_j(\mb{x}) = 0$, and the OFC comprises the
single constraint $f(\mb{x}) \ge K$ for maximization problems ($f(\mb{x})
\le K$ for minimization problems).

\section{Proof of Theorem \ref{thm:esoHierarchy} based on reduced products}


The proof is a counterexample. It relies on the fact that universal Horn
formulae are preserved under reduced products.
That is, if $\psi$ is a universal Horn formula and
structures $\mb{A}$ and $\mb{B}$ satisfy $\psi$, then so does their
reduced product $\mb{A.B}$.
~ From the definition of Problem  \ref{def:matching}, note that any feasible solution to
\it{Maximum Matching} consists of at least $K$ matched edges.  
 
\begin{lemma}
Maximum Matching cannot be expressed in ESO universal Horn logic,
even assuming that the input structures come with a successor relation.
$\hfill \Box$
\label{lem:matchingReducedProducts}
\end{lemma}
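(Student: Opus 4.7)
The strategy is to exploit the classical preservation theorem that universal Horn sentences are preserved under reduced products of structures. If some ESO universal Horn sentence $\Phi = \exists \bar{R}\,\psi(\bar{R}, E, \ldots)$ defined Maximum Matching, then, given YES instances $\mathbf{A}$ and $\mathbf{B}$ with witnesses $\bar{R}^{\mathbf{A}}$ and $\bar{R}^{\mathbf{B}}$ for the existential variables, the coordinatewise witness $\bar{R}^{\mathbf{A}.\mathbf{B}}$ in the reduced product would make $\psi$ true in $\mathbf{A}.\mathbf{B}$, so $\mathbf{A}.\mathbf{B} \models \Phi$. The plan is therefore to produce two YES instances $\mathbf{A}$ and $\mathbf{B}$ of Maximum Matching whose reduced product $\mathbf{A}.\mathbf{B}$ is a NO instance, yielding the required contradiction.

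First, I would fix a concrete input vocabulary: a binary edge relation $E$, a binary successor symbol $\mathrm{Succ}$, and some encoding for the threshold $K$ (for example, a unary predicate $P_K$ of cardinality $K$, or a constant pointing to the $K$-th element in the successor order). This encoding determines how the threshold transforms under reduced products; for instance, with a cardinality-$K$ unary encoding, the induced threshold in $\mathbf{A}.\mathbf{B}$ becomes $K_A \cdot K_B$, which is the value one must compare against the matching number of the product.

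The second step, which I expect to be the principal obstacle, is constructing the explicit counterexample pair. Direct products of graphs tend to have \emph{large} matchings, so the candidates cannot be generic dense graphs. A promising line is to take $\mathbf{A}$ and $\mathbf{B}$ to be disjoint unions of odd cycles together with some isolated vertices, choose $K_A$ and $K_B$ equal to the respective matching numbers so that each factor is a \emph{tight} YES instance, and then exploit the parity obstructions of odd cycles in the tensor product to force the matching number of $\mathbf{A}.\mathbf{B}$ strictly below the induced threshold. Pinning down this matching deficit quantitatively — so that the failure of the matching condition in the product is unambiguous for the chosen $K$-encoding — is the real technical work of the lemma.

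Finally, I would handle the successor relation. In the reduced product, $\mathrm{Succ}$ will generally not remain a genuine linear successor, but that is irrelevant here. The formula $\psi$ treats $\mathrm{Succ}$ only as a binary relation symbol, and preservation of universal Horn under reduced products is then unconditional. We do not need $\mathbf{A}.\mathbf{B}$ to be a ``legitimate'' Maximum Matching input structure — only that it fails $\Phi$ while both $\mathbf{A}$ and $\mathbf{B}$ satisfy $\Phi$, which delivers the contradiction and proves the lemma.
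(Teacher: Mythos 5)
Your reduction scheme (Horn preservation under reduced products, with product witnesses for the existentially quantified relations) is set up correctly, but the proposal has two genuine gaps, and the second is fatal to the route you chose. First, the whole burden of the proof is the construction of two YES instances whose product is a NO instance, and you explicitly defer it as ``the real technical work.'' With your own cardinality-$K$ encoding it cannot be done: if $M_A$ and $M_B$ are matchings witnessing the two factors, then for every pair of matched edges $(a,a')\in M_A$ and $(b,b')\in M_B$ the direct product contains the two disjoint edges $\bigl((a,b),(a',b')\bigr)$ and $\bigl((a,b'),(a',b)\bigr)$, and these $2|M_A||M_B|$ edges are pairwise disjoint; hence the matching number of $\mathbf{A}.\mathbf{B}$ is at least $2K_AK_B$, which meets the induced threshold $K_AK_B$. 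So the product of YES instances is automatically YES, and odd-cycle parity obstructions do not rescue the plan. Second, your closing remark --- that $\mathbf{A}.\mathbf{B}$ need not be a legitimate input structure, ``only that it fails $\Phi$'' --- is circular: the only property of $\Phi$ available to you is that it agrees with Maximum Matching on legitimate structures, i.e.\ those in which $\mathrm{Succ}$ really is a successor relation. On a product where $\mathrm{Succ}$ is no longer a successor (and where, with the constant-symbol encoding, the induced threshold is not even well defined), nothing forces $\Phi$ to fail, so no contradiction is obtained. Since successor structures are not closed under (reduced) products, this is exactly the point that must be argued rather than waved away.

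This is also where your attempt diverges from the paper's proof. The paper does not look for a YES$\times$YES$\to$NO pair of inputs at all: it takes $\mathbf{A}$ and $\mathbf{B}$ on the same universe $\{0,\dots,2\theta\}$ with \emph{identical} interpretations of every input symbol (the path edges, the successor, and $K=\theta$), lets them differ only in the chosen interpretation of the quantified matching relation $F$ (the odd-position versus even-position edges of the path), and forms the product by intersection, so that the input part --- successor included --- is unchanged while $F$ becomes empty. In other words, the paper's counterexample lives at the level of the second-order witness $F$, whereas you (correctly, for ESO semantics) treat $F$ existentially and take product witnesses, which is precisely why you are then forced to find input-level counterexamples --- a demand that, as argued above, cannot be met either when the factors are distinct legitimate instances (successor is destroyed) or when they agree on the input part (the product is the same YES instance). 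Whether the paper's fixing of the witness $F$ is a sound way to handle an existentially quantified relation is a separate question, but as sketched your version does not go through.
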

\begin{proof}
Suppose we have a signature divided into two parts:
\newline
A set of relation symbols $\Sigma = \{S_1, \cdots, S_m, F\}$, and a
separate set $\rho$ of relation and constant symbols.
$F(u,v)$ is true iff $(u,v)$ is a matched edge.

The relations in $\rho$ are first order (input), hence the Horn restriction
does not apply to these.  On the other hand, the relations in $\Sigma$
are second order (quantified), and hence the Horn condition applies to
these relations.  In particular, it applies to $F$, which is unknown; it
is not a part of the input.

Let us introduce this definition (henceforth referred to as Property
\ref{def:maxMatchProp}):
\begin{definition}
(Maximum Matching property).
Given $G = (V,E)$ and an integer $K$, is there a matching $F \subseteq E$ such that $|F| \ge K$?
$\hfill \Box$
\label{def:maxMatchProp}
\end{definition}

Let \bf{A} and \bf{B} be two input structures on the same universe
 $\{0, 1, \cdots, 2\theta\}$ ($\theta \in \mathbb{N}$) with a successor
relation; 
furthermore, for all symbols in the signature $\rho$, let \bf{A} and
\bf{B} agree on the interpretation of the symbols; however, they may
differ on relations in $\Sigma$.

Now define a product structure \bf{A.B}
to be the structure on the domain $\{0, 1, \cdots, 2\theta\}$ with a linear
ordering, in which every symbol in the signatures $\Sigma$ and $\rho$ is
interpreted as the \it{intersection} of the two interpretations in \bf{A}
and \bf{B}.
~Then it is easy to show that any Horn formula $\phi$ that is satisfied
in both \bf{A} and \bf{B} is also satisfied in \bf{A.B}.
~(This is exactly by the argument that shows that Horn formulae are closed
under direct products; see  Chapter 9, Page 417, of \cite{hodges93},
or see Page 493 of \cite{hodgesBookChapter}.)

Then suppose we let \bf{A} and \bf{B} be structures with, say,
$2\theta+1$ vertices $\{0, 1, \cdots, 2\theta\}$.
~Assume that in both structures, we have edges from the set 
$\{(i-1, i) ~|~ 1 \le i \le 2 \theta\}$.

In \bf{A}, we let $F$ consist of the $\theta$ edges
$\{ (0,1), (2,3), \cdots, (2\theta-4, 2\theta -3), (2\theta-2, 2\theta -1) \}$;
~and in \bf{B}, we let $F$ consists of the $\theta$ edges
$\{ (1,2), (3,4), \cdots, (2\theta-3, 2\theta -2), (2\theta-1, 2\theta) \}$.
~Thus $F$ indeed defines a matching in both structures.

Also, we let $K = \theta$ in both structures.
~So, Property \ref{def:maxMatchProp} is true in both \bf{A} and \bf{B}.
~That is, there exists a matching of size at least $\theta$ in \bf{A} and
\bf{B}.

However, Property \ref{def:maxMatchProp} is false in \bf{A.B}, since in this structure,
$F$ is empty, even though the lower bound $K$ is still equal to $\theta$.
~It follows that Property \ref{def:maxMatchProp} cannot be expressed as a Horn
formula, and hence we conclude that \it{Maximum Matching},
a problem in the class \bf{P}, cannot be expressed in ESO universal Horn
logic.
$\hfill \Box$
\end{proof}

Notes on the above proof:
\begin{itemize}
\item
Readers may ask, \it{Is it not necessary to look for a solution $F$ (a
matching) in the product structure?} --- the answer is no, it is \it{not}
necessary.
~If Property \ref{def:maxMatchProp} can indeed be expressed in ESO
universal Horn logic, and assuming that the property is satisfied in
\bf{A} and \bf{B}, then it must be satisfied in the product structure
\bf{A.B}, according to the results in \cite{hodges93} and
\cite{hodgesBookChapter}.

\item
Recall that the proof is a counterexample.  This allows us to choose any
value for $F$ (that it could possibly take) in the structures \bf{A} and
\bf{B}.

\item
Also note that this proof does \it{not} deal with the sub-property ``$|F|
\ge K$" alone --- it deals with the whole \it{Maximum Matching} property
(Property \ref{def:maxMatchProp}).

\item
In place of \it{Maximum Matching}, we could have used any decision
problem in \bf{P} derived from a maximization problem.
\end{itemize}

\section{Proof of Theorem \ref{thm:esoHierarchy} based on approximability theory}

This proof assumes that $\mb{P} \neq \mb{NP}$.
~We use approximability theory to show that some decision problems in the
class \bf{P}, derived from optimization problems, cannot be
expressed in ESO universal Horn logic.
In particular, we rely on proven upper and lower bounds on the
approximation ratios (defined below) for the Vertex Cover problem.
We disprove the following proposition, which appeared as a theorem in \cite{gradel91}
(also as Theorem 3.2.17 in \cite{gradelAnd7others}):
\begin{proposition}
Assuming a successor relation as a part of the input structure, 
every problem in the class \bf{P} can be expressed in ESO universal Horn
logic.
$\hfill \Box$
\label{thm:esoHornYes}
\end{proposition}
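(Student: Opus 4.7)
The plan is to follow the Fagin-style route for logical capture results: simulate, within ESO universal Horn logic, the computation of a polynomial-time Turing machine $M$ on the input structure. Fix a problem $L \in \mb{P}$ decided by a deterministic Turing machine $M$ running in time bounded by some $n^k$. I would code tape positions and time steps as $k$-tuples over the universe, using the lexicographic order on $k$-tuples (first-order definable from the given successor relation) to range over the required $n^k$ many cells and time points.

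Next, I would existentially quantify second-order relation variables encoding a computation tableau: $T_a(\bar p, \bar t)$ for each tape symbol $a$, $S_q(\bar t)$ for each state $q$, a head-position relation $H(\bar p, \bar t)$, and a nullary acceptance atom $\mathit{Acc}$. The first-order matrix would be the conjunction of (i) initialization clauses that translate the input relations into the tableau row at $\bar t = \bar 0$; (ii) one universal Horn clause per transition of $M$, of the shape
\begin{equation*}
S_q(\bar t) \wedge T_a(\bar p, \bar t) \wedge H(\bar p, \bar t) \wedge \mathrm{succ}(\bar t, \bar t') \rightarrow S_{q'}(\bar t'),
\end{equation*}
together with parallel clauses for updating $T$ and $H$ and for the inertia of cells away from the head; and (iii) a clause $S_{q_{\mathrm{acc}}}(\bar t) \rightarrow \mathit{Acc}$. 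The whole sentence finally asserts $\mathit{Acc}$. Because $M$ is deterministic, each such clause has exactly one positive atom as its head, so the matrix is genuinely universal Horn. The easy direction (that every such formula defines a problem in \mb{P}) is standard, since evaluating a fixed ESO universal Horn sentence reduces to unit-propagation on polynomially many ground clauses.

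The main obstacle I anticipate is the semantic gap between ``the witnessed relations contain the true run of $M$'' and ``the witnessed relations equal that run.'' Horn clauses only propagate positive information, so a candidate ESO witness might set $T_a$, $S_q$, or $H$ true at tuples that do not correspond to any real configuration; nothing in the first-order matrix forbids such overcommitment. For a deterministic machine this slack is usually reconciled by passing to a least (minimal) model, but plain ESO guarantees only \emph{some} model, not the minimal one. I therefore expect the argument to stand or fall on ruling out ``bloated'' witnesses in which the Horn theory is satisfied yet $\mathit{Acc}$ is derived on an input that $M$ actually rejects. Without an appeal to a fixed-point semantics stronger than ESO alone provides, that step looks fragile, and it is presumably where the present paper locates the gap in \cite{gradel91} that Theorem \ref{thm:esoHierarchy} exploits.
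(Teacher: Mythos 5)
First, a point of orientation: the paper does not prove this proposition at all. It imports it from Gr\"{a}del \cite{gradel91} as an established theorem and then sets out to \emph{refute} it (Lemma \ref{lem:esoHornNo}), arguing that it holds only for what the paper calls machine level expressions. So there is no in-paper proof to compare yours against; what you have written is a reconstruction of Gr\"{a}del's original machine-level argument, which is exactly the kind of expression the paper classifies as an MLE. Your overall plan (tableau of a deterministic \textsc{ptime} machine, $k$-tuples for time and space via the successor relation, one definite Horn clause per transition plus inertia clauses) is the right skeleton for that argument.

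However, your encoding of acceptance has a genuine gap, and you half-see it yourself. As written, every clause in your matrix has a positive head and you additionally assert the positive unit $\mathit{Acc}$. Then the interpretation that makes \emph{every} second-order atom true (all $T_a$, all $S_q$, all $H$, and $\mathit{Acc}$) satisfies every clause of the matrix on every input, so your sentence defines the trivial always-accept problem. The ``bloated witness'' problem is therefore not a fragility to be worried about but an outright failure of the construction in its current form. The standard repair, which is what Gr\"{a}del actually does, is to encode acceptance \emph{negatively}: drop $\mathit{Acc}$ and instead add the goal clause $S_{q_{\mathrm{rej}}}(\bar t) \rightarrow \bot$ (a Horn clause with no positive literal, asserting the rejecting state is never entered). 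One then argues via the minimal model of the definite part of the program: by determinism of $M$ and the initialization and inertia clauses, that minimal model is exactly the true computation tableau, and any other model only contains \emph{more} atoms, hence can only violate the negative clause more easily. Consequently a witness exists iff the minimal model itself satisfies the negative clause iff $M$ does not reject, which is what is needed. With that fix your argument closes; no fixed-point semantics beyond plain ESO is required, precisely because bloating can never help a witness satisfy a negative clause.
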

\begin{definition}
\cite{pancoRanjan}
A \bf{NP-optimization} problem $Q$ is a tuple
$Q = \{I_\Q, F_\Q, f_\Q, opt_\Q\}$, where
\begin{enumerate}
\item[(i)] $I_\Q$ is a set of instances to $\Q$,

\item [(ii)]
$F_\Q (I)$ is the set of feasible solutions to instance $I$,

\item[(iii)]
$f_\Q (I, S)$ is the {\em objective function} value to a solution $S \in
F_\Q(I)$ of an instance $I \in I_\Q$.
~It is a function 
$f:\bigcup_{I \in I_\Q}  [\{I\} \times F_\Q(I)] \rightarrow \bb{R}^+_0$ 
(non-negative
reals)\footnote{Of course, when it comes to computer representation,
rational numbers will be used.}, computable in time polynomial in the
size $|I|$ of the domain of $I$,

\item[(iv)]
For an instance $I \in I_\Q$, $opt_\Q (I)$ is either the minimum or
maximum possible value that can be obtained for the objective function,
taken over all feasible solutions in $F_\Q(I)$.

$\ds
opt_\Q (I) = \max_{S \in F_\Q(I)} f_\Q (I, S)$ (for NP-maximization
problems),

$\ds
opt_\Q (I) = \min_{S \in F_\Q(I)} f_\Q (I, S)$ (for NP-minimization
problems),

\item[(v)]
The following decision problem is in \bf{NP}:
\it{Given an instance $I$ and a non-negative integer $K$, is there a feasible
solution $S \in F_\Q (I)$, such that $f_\Q (I, S) \ge K$ (for an
NP-maximization problem), or $f_\Q (I, S) \le K$ (for an
NP-minimization problem)?}
\end{enumerate}

The set of all such NP-optimization problems is the $\mb{NP_{opt}}$ class.
$\hfill \Box$
\label{def:npOptProblem}
\end{definition}
Note: \it{Feasibility} as defined in point (ii) above has nothing to do
with any upper or lower bounds on the objective function value $f_\Q$.
~It only concerns the BFC, see Sec. \ref{sec:bfcOfc}. ~This also applies
to Def. \ref{def:decisionProblem} below.

\begin{definition}
\bf{Decision versions}.
See Def. \ref{def:npOptProblem}.
Given a non-negative integer $K$ and an instance $I \in I_\Q$,
the decision version of an NP-optimization problem $Q$ asks whether there is a
feasible solution $S \in F_\Q (I)$, such that 
$f_\Q (I, S) \ge K$ (if $Q$ is a maximization problem), or 
$f_\Q (I, S) \le K$ (if $Q$ is a minimization problem).
$\hfill \Box$
\label{def:decisionProblem}
\end{definition}

\begin{theorem}
(See 
Theorem 3.2.17 in Page 147 of \cite{gradelAnd7others}.)
If a decision problem $P_1$ can be represented by an ESO universal Horn
sentence, then 
$P_1$ can solved in polynomial time.
$\hfill \Box$
\label{thm:gradel}
\end{theorem}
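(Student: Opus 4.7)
The plan is to give a polynomial-time decision procedure by reducing the evaluation problem to propositional Horn satisfiability, which is known to be solvable in linear time (Dowling--Gallier). Write the sentence as $\exists R_1 \cdots R_m \,\forall x_1 \cdots x_k\, \phi(\overline{x},\overline{R})$ where $\phi$ is a conjunction of Horn clauses whose atoms use either symbols from the input vocabulary or the existentially quantified relations $R_1,\ldots,R_m$. Given an input structure $\mb{A}$ of size $n$, the task is to decide whether suitable interpretations for the $R_i$ over the universe of $\mb{A}$ exist so that $\phi$ holds for every tuple $\overline{x}$.

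The first step is \emph{grounding}: instantiate $\forall \overline{x}$ over all $n^k$ tuples from the universe. Since $\phi$ is a fixed conjunction of Horn clauses, this produces at most $n^k$ times a constant number of ground Horn clauses, each built from atoms of the form $P(\overline{a})$ where $P$ is an input symbol or some $R_i$ and $\overline{a}$ is a tuple of elements. The second step is \emph{simplification using $\mb{A}$}: every ground atom whose relation symbol is in the input vocabulary can be replaced by its truth value under $\mb{A}$. A clause with a false body-literal (i.e.\ a true negative literal) or a true head becomes trivially satisfied and is discarded; a true body-literal is erased from the body; a false head is erased, leaving a purely negative constraint. The remaining formula is a propositional Horn formula $\Phi_{\mb{A}}$ over the polynomially many ground atoms built from the second-order symbols $R_1,\ldots,R_m$.

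The third step is to observe that the ESO Horn sentence holds in $\mb{A}$ if and only if $\Phi_{\mb{A}}$ is propositionally satisfiable: an assignment to the ground atoms $R_i(\overline{a})$ corresponds precisely to a choice of interpretations for the $R_i$ over the universe of $\mb{A}$. Running a linear-time Horn-SAT algorithm on $\Phi_{\mb{A}}$ thus decides the original question, and the whole procedure is polynomial in $n$ because $|\Phi_{\mb{A}}| = O(n^k)$ for the fixed constants $k$ and $m$ determined by the sentence.

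The main obstacle is not the correctness of the reduction, which is immediate once grounding is set up, but rather the bookkeeping needed to argue that the presence of a successor relation and constants in the input vocabulary does not break the polynomial bound on $|\Phi_{\mb{A}}|$. Since the input vocabulary only contributes fixed truth values used in the simplification step, adding a successor relation (or any other fixed-arity relation or constant) merely affects which clauses survive simplification, not how many ground clauses are produced in the first place. With this observation, the polynomial-time bound on the reduction follows, and the theorem is established.
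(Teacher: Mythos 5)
Your proposal is correct and is essentially the same argument the paper relies on: the paper justifies this theorem in one line by appeal to the polynomial-time solvability of propositional HornSAT, and your grounding-plus-simplification reduction is exactly the standard construction behind that appeal (as in Gr\"{a}del's original proof). You have simply spelled out the details that the paper leaves implicit.
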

The above theorem follows from the fact that HornSAT (satisfiability of a
propositional Horn formula) can be solved in \textsc{ptime}\footnote{In this
paper, we only use \textsc{ptime} as a shorthand for polynomial time, not to
represent the problems in the class \bf{P}.}
\cite{jonesLaaser76}.


\begin{definition}
\bf{Approximation ratio}.
Given an instance $I$ of an optimization problem $Q$ (as in Def.
\ref{def:npOptProblem}) and a feasible solution $S \in F_\Q (I)$, 
the {approximation ratio} is defined as 
$\ds \frac {f_\Q (I, S)} {f^{(opt)}_\Q (I)}$ for minimization problems
and 
$\ds \frac {f^{(opt)}_\Q (I)} {f_\Q (I, S)}$ for maximization problems.
We assume that $f^{(opt)}_\Q (I) > 0$, where
$f^{(opt)}_\Q (I)$ is the optimal solution value to instance $I$.
$\hfill \Box$
\label{def:approxRatio}
\end{definition}

\begin{definition}
\bf{Approximation problems $\mb{A^T_\Q}~(K_\Q)$ and $\mb{A^B_\Q}~(L_\Q)$}.

\ul{Given}:
An instance $I$ of a problem $Q \in \mb{NP_{opt}}$ (cf. Def. \ref{def:npOptProblem}).

(Upper bound) Problem $\mb{A^T_\Q} ~ (K_\Q)$.
Is there a feasible solution $S \in F_\Q
(I)$, such that the approximation ratio is at most $K_\Q$?

(Lower bound) Problem $\mb{A^B_\Q} ~ (L_\Q)$.
Is there a feasible solution $S \in F_\Q (I)$, 
such that the approximation ratio\footnote{Yes indeed, both problems in 
Def.  \ref{def:approxProblem} have the phrase ``at most" in their
definitions.  This is \bf{not} a typographical error.}
is at most $L_\Q$?

($1 \le L_\Q < K_\Q$.) 
\label{def:approxProblem}
$\hfill \Box$
\end{definition}

Some remarks about Def. \ref{def:approxProblem} (also see the Appendix): 
\begin{remark}
A \it{feasible solution} as mentioned in the definition of $A^T_\Q
~(K_\Q)$ and $A^B_\Q ~(L_\Q)$ has nothing to do with the
objective function value $f_\Q$.
~Hence it has nothing to do with the approximation ratio either.  This
feasibility only concerns the constraints in the BFC. ~See Sec.
\ref{sec:bfcOfc}.

Def. \ref{def:approxProblem} is irrelevant for problems in
\bf{P}, hence we assume that the decision version of Problem $Q$ 
is not known to be polynomially solvable.
$\hfill \Box$
\end{remark}

\begin{definition}
\bf{The APX class}.
~Consider a problem $Q$ from the class
$\mb{NP_{opt}}$ (cf. Def. \ref{def:npOptProblem}).  Then $Q$ 
is defined to be a member of APX iff the decision problem $A^T_\Q ~(K_\Q)$ can be
solved in \textsc{ptime} for some constant $K_\Q \ge 1$.
$\hfill \Box$
\label{def:APX}
\end{definition}

\begin{remark}
Strictly speaking, the upper bound problem $A^T_\Q ~(K_\Q)$ depends on
the parameter $K_\Q$. ~However, for ease of discussion, let us assume
that $K_\Q$ is pre-defined for $A^T_\Q$; assume that $K_\Q$ is
``in-built" into $A^T_\Q$. ~Thus henceforth, we simply write $A^T_\Q$
rather than $A^T_\Q ~(K_\Q)$. 

~We make a similar assumption for the lower bound problem $A^B_\Q$ and the
related parameter $L_\Q$; we will write $A^B_\Q$ rather than $A^B_\Q
~(L_\Q)$. 

For a proven upper bound $K_\Q$, $A^T_\Q~(K_\Q)$ is \textsc{ptime}
solvable.  Similarly, for a proven lower bound $L_\Q$,
$A^B_\Q~(L_\Q)$ is NP-complete.
$\hfill \Box$
\end{remark}

Example: Steiner Tree is a problem in APX, with parameter $K_\Q = 2$.
%
%
Hence given an instance $I$ of the Steiner tree problem, there is a
\textsc{ptime} algorithm which can decide whether there is a feasible
solution to $I$ with an approximation ratio of at most two.

However, $K_\Q$ is an \it{upper bound} on the approximation ratio; we
know that there is a \textsc{ptime} algorithm that guarantees a feasible
solution within this upper bound.
Furthermore, many problems (such as Vertex Cover, below) in the class APX
also have a proven \it{lower bound} $L_\Q$.
~For such problems, the approximation problem ${A^B_\Q}$ is known to
be \bf{NP}-complete.

\begin{problem}
\bf{Vertex Cover (optimization version)}.

\ul{Given}: A graph $G = (V,E)$.

\ul{To do}:  Mark vertices in $V$ in such a way that
\newline
(i)  ~For any $x, y \in V$ where $x < y$, if 
$(x, y) \in E$, then either $x$ of $y$ is marked, and 
\newline
(ii)  ~The number of marked vertices is to be minimized.

\ul{Feasible solution to the problem}:
\newline
A set of marked vertices that obeys property (i).
\label{prob:vertexCover}
$\hfill \Box$
\end{problem}

Let $A^T_{vc}$ and $A^B_{vc}$ be the approximation problems
$\mb{A^T_\Q}$ and $\mb{A^B_\Q}$ respectively (from Def.
\ref{def:approxProblem}), applied to Vertex Cover. 

$A^T_{vc}$ and $A^B_{vc}$ are members of the class \bf{NP}; in particular,
$A^T_{vc}$ is a member of the class \bf{P}, and $A^B_{vc}$ is \bf{NP}-complete.

The upper bound $K_\Q = K_{vc} = 2$ and the lower bound
$L_\Q = L_{vc} = 1.36006$ \cite{dinurSafra2005}.
~In fact, for $L_{vc}$, one could use any number $\gamma$ such that
$1 < \gamma \le 1.36006$.
~For ease of representation in an ESO formula, let us use $\gamma = 1.3$.

Here is the \label{page:coreArg} core argument:
\begin{lemma}
Proposition \ref{thm:esoHornYes} is incorrect.
\label{lem:esoHornNo}
$\hfill \Box$
\end{lemma}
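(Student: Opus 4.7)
The plan is to derive a contradiction with the assumption $\bf{P} \neq \bf{NP}$ by exploiting the two approximation problems $A^T_{vc}$ and $A^B_{vc}$ for Vertex Cover, whose statements differ only in the numerical threshold on the approximation ratio. First I would assume, for contradiction, that Proposition \ref{thm:esoHornYes} holds, i.e., every problem in $\bf{P}$ is expressible in ESO universal Horn logic (in the presence of a successor relation). Because Vertex Cover admits a well-known greedy $2$-approximation, the upper-bound decision problem $A^T_{vc}$ lies in $\bf{P}$, so by the assumption there is an ESO universal Horn sentence $\phi_2$ that expresses $A^T_{vc}$ with the built-in constant $K_{vc}=2$.

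The next step is to isolate how the constant $K_{vc}$ appears in $\phi_2$. Following the BFC/OFC split of Sec. \ref{sec:bfcOfc}, the BFC portion of $\phi_2$ captures ``the marked vertex set is a vertex cover'' and is independent of the approximation threshold, while the OFC portion is where the bound on the cover's size relative to the optimum is enforced. Since a successor relation is available on the universe, any fixed rational threshold is encodable via the numeric structure. Replacing $K_{vc}=2$ by $\gamma=1.3$ uniformly inside the OFC, while leaving the BFC clauses untouched, therefore yields another ESO universal Horn sentence $\phi_{1.3}$ whose semantics is exactly the lower-bound approximation problem $A^B_{vc}$ with $L_{vc}=1.3$.

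Applying Theorem \ref{thm:gradel} to $\phi_{1.3}$ then forces $A^B_{vc}$ into $\bf{P}$. However, by the Dinur--Safra inapproximability bound \cite{dinurSafra2005}, $A^B_{vc}$ is $\bf{NP}$-complete for any $\gamma$ with $1<\gamma\le 1.36006$, and in particular for $\gamma=1.3$. Under $\bf{P}\neq\bf{NP}$ this is a contradiction, so Proposition \ref{thm:esoHornYes} must be incorrect.

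The hard part will be the second step: one must argue rigorously that the threshold $K_{vc}$ in $\phi_2$ is a genuinely detachable numerical parameter rather than something woven inextricably into the Horn clauses, so that the substitution $2\mapsto 1.3$ is a well-defined syntactic operation preserving the universal Horn shape. This is essentially a uniformity claim about how rational thresholds in the OFC are representable over a successor-ordered universe independently of the BFC, and it is the only place where the proof depends on more than a direct invocation of Theorem \ref{thm:gradel} and the known complexity of $A^T_{vc}$ and $A^B_{vc}$.
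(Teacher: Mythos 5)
Your argument is essentially identical to the paper's own proof: assume the proposition, express $A^T_{vc}$ (with built-in bound $2$) as an ESO universal Horn sentence viewed as a conjunction of a feasibility part and an approximation-ratio part, swap the threshold $2$ for $1.3$ to obtain a sentence for $A^B_{vc}$, and invoke Theorem \ref{thm:gradel} to place $A^B_{vc}$ in polynomial time, contradicting its NP-completeness under $\mathbf{P}\neq\mathbf{NP}$. The step you single out as the hard part --- that the threshold is a detachable numerical parameter so the substitution is a well-defined syntactic operation preserving the universal Horn shape --- is precisely the step the paper itself asserts without further justification (``this sentence can be written as a conjunction of two sub-formulae''), so your proposal matches the paper's proof in structure and is, if anything, more explicit about where the burden of rigor lies.
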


\begin{proof}
Since $A^T_{vc} \in \mb{P}$, Theorem \ref{thm:esoHornYes} can be
applied.
Then it follows that $A^T_{vc}$ can be expressed as an ESO universal Horn
sentence $\phi_1$.  This sentence can be written as a conjunction of two
sub-formulae:
\begin{enumerate}
\item
(the feasibility condition)
One stating that every edge is covered, which is easily written as
\[
\exists S ~ \forall x \forall y ~ \neg E(x,y) \vee S(x) \vee S(y)
\]
 and
\item
 (the approximation ratio condition)
Another stating that the approximation ratio $\alpha (I, S)$ is at most
two, that is,
\[
\alpha (I,S) \le 2.
\]
\end{enumerate}

However, if we replace the sub-formula for the condition $\alpha (I, S)
\le 2$ with the sub-formula for $\alpha (I, S) \le 1.3$, we obtain a new ESO
universal Horn sentence $\phi_2$ which captures $A^B_{vc}$. 

Though $\ds A^B_{vc}$ is known to be \bf{NP}-complete, we have just
obtained an ESO universal Horn sentence ($\ds \phi_2$) for this problem,
implying that $\ds A^B_{vc}$ can be decided in \textsc{ptime}, when we
apply Theorem \ref{thm:gradel}.
But this contradicts our assumption that \bf{P} $\neq$ \bf{NP}.
$\hfill \Box$
\end{proof}

Hence we have shown the following:
\begin{lemma}
Assuming that \bf{P} $\neq$ \bf{NP}, and assuming a successor relation as
a part of the input vocabulary,
$A^T_{vc}$, a problem solvable in \textsc{ptime}, cannot be expressed as an
ESO universal Horn sentence.
$\hfill \Box$
\end{lemma}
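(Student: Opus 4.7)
The plan is to argue by contradiction, essentially as a direct corollary of Lemma \ref{lem:esoHornNo}. First I would suppose, for contradiction, that $A^T_{vc}$ admits an ESO universal Horn sentence $\phi_1$ over input vocabularies equipped with a successor relation. The goal is then to show this assumption leads, via the same syntactic manipulation used in the proof of Lemma \ref{lem:esoHornNo}, to $A^B_{vc} \in \mathbf{P}$, contradicting its \textbf{NP}-completeness under the hypothesis $\mathbf{P} \neq \mathbf{NP}$.

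The key step is to isolate the two syntactic pieces of $\phi_1$ corresponding to the BFC/OFC decomposition of Section \ref{sec:bfcOfc}: one sub-formula expressing that the second-order witness $S$ encodes a vertex cover (the feasibility condition $\forall x \forall y \, \neg E(x,y) \vee S(x) \vee S(y)$, which is manifestly universal Horn in $S$), and a second sub-formula asserting that the approximation ratio is at most the constant $K_{vc} = 2$. Because the approximation-ratio sub-formula depends on the constant $K_{vc}$ only through a numeric threshold comparison, substituting $1.3$ for $2$ in this sub-formula yields another ESO universal Horn sentence $\phi_2$ of the same syntactic shape, which now captures $A^B_{vc}$ (using $\gamma = 1.3 \le L_{vc}$ from \cite{dinurSafra2005}). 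Applying Theorem \ref{thm:gradel} to $\phi_2$ would then place $A^B_{vc}$ in \textsc{ptime}, yielding the desired contradiction with $\mathbf{P} \neq \mathbf{NP}$.

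The main obstacle I anticipate is justifying that the threshold-constant substitution genuinely preserves the universal Horn structure: one must verify that the sub-formula encoding $\alpha(I,S) \le c$ depends on $c$ purely as a numeric parameter, so that replacing $c = 2$ by $c = 1.3$ leaves the pattern of universal quantifiers and the Horn form of the clauses untouched. Since $c$ enters the sentence only through the comparison against the ratio $|S|/\mathit{opt}(I)$ (which in turn is encoded uniformly over instances using the successor relation), this syntactic invariance is immediate; the substitution does not introduce new second-order quantifier alternations or non-Horn clauses. With this observation in hand, the lemma follows, since Proposition \ref{thm:esoHornYes} would otherwise apply to the \textsc{ptime} problem $A^T_{vc}$ and trigger the contradiction above.
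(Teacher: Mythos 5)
Your proposal is essentially the paper's own argument: the paper proves this lemma as an immediate consequence of Lemma \ref{lem:esoHornNo}, whose proof likewise assumes an ESO universal Horn sentence $\phi_1$ for $A^T_{vc}$, splits it into the feasibility sub-formula and the approximation-ratio sub-formula, replaces the bound $2$ by $1.3$ to obtain $\phi_2$ for $A^B_{vc}$, and applies Theorem \ref{thm:gradel} to contradict $\mathbf{P} \neq \mathbf{NP}$. Even the obstacle you flag --- that the ratio bound enters only as a numeric threshold so the substitution preserves the universal Horn form --- is the same point the paper itself relies on (and defends only informally in its appendix), so your route coincides with the paper's.
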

This concludes the second proof of Theorem \ref{thm:esoHierarchy}.

%
%

\section{Separation between machine level and structure level}\label{sec:future}

We note a discrepancy between the results in Theorems
\ref{thm:esoHierarchy} and \ref{thm:esoHornYes}.  We observe that
there is a fundamental difference between the types of expressions dealt
with by the two theorems.
Theorem \ref{thm:esoHierarchy} is about \it{structure level expressions}
(SLE) which deal ``directly" with properties of mathematical structures,
whereas Theorem \ref{thm:esoHornYes} is about \it{machine level
expressions} (MLE) which represent encodings of machine level
computations of properties; these merely encode the \bf{steps}
(\bf{moves}) of the computation.

We note that (ESO universal Horn) = \bf{P} at the \it{machine level},
whereas (ESO universal Horn) $\subset$ \bf{P} at the \it{structure
level}.

Observe that in the two proofs, we \bf{do not} use a Turing machine; the
expressions \bf{are not} the result of an output from a Turing machine computation.

\subsection{The importance of structure level over machine
level}\label{sec:importance}

The precise definitions for \it{structure level} and \it{machine level} are
provided later in the paper --- but first, we describe intuitive
notions.

A few observations are in order:
\begin{enumerate}
\item
Descriptive Complexity is a very useful tool at the structure level,
\it{not} at the machine level.
\item
A large subset of problems in the class \bf{NP} is derived from
optimization problems, hence this is a important sub-class.
(Problem \ref{def:matching} has also been derived from an
optimization problem.)
\item
In spite of the important sub-class described above, we were unable to
locate any \it{structure level} expression for such problems in the
literature (although one can construct \it{machine level} expressions
using the methods in \cite{fagin}, Chapter 7 of \cite{immerman} or
\cite{immerman86}).  This is a key motivation for our line of research.
\end{enumerate}

From a mathematician's viewpoint, SLE's are the natural choice to
express a problem/property, \it{not} the MLE's.  To logically express a
property of a given mathematical structure, what is the need to put it
through computation first? 

To write an MLE for a given property $\mc{P}$, we first need to design a
TM which will decide whether $\mc{P}$ is true (for any given input).  But
this means that the space/time complexity to decide $\mc{P}$ is already
known, hence it is pointless writing an MLE for $\mc{P}$.

To elaborate on the item numbered one above:
The usefulness of Descriptive Complexity lies in the fact that if one
could write a structure level expression in a certain logic for a
decision problem $\mc{D}$, then it can be deduced that $\mc{D}$ requires
a certain amount of resource (time and/or space) for its computation;
that is, we can recognize the complexity class that $\mc{D}$ belongs to.

However, our main result (Theorem \ref{thm:esoHierarchy})
shows that even though a large class of problems
(those derived from optimization) is \textsc{ptime} solvable, they cannot
be expressed in ESO universal Horn logic at the structure level.  Hence
for such problems\footnote{Of course, we fully acknowledge contributions
in the literature that provided machine level expressions --- but that is
only one side of the story.}, the usefulness of Descriptive Complexity is
lost while using this particular logic.

\subsection{An informal definition of the two levels}

\bf{Machine level expressions} (MLE).
These express the \it{computation} of a property (in a Turing machine,
for instance).
Here, \it{computation of a property} is defined as \it{checking whether a
property is true using a computation device}, for a given input.

Such an expression encodes the computation steps of a Turing machine
(TM), for example, as in the proof by Fagin \cite{fagin} that ESO logic
captures the class \bf{NP} and the proof by Immerman \cite{immerman86} that LFP
(under ordered structures) captures the class \bf{P}.
~These expressions state that 
\newline
(i) if we go through the steps of a computation
device such as a TM, and 
\newline
(ii) if the TM finally reaches an accepting state,
\newline
then the property must be true for the given input structure.

For example, to write a machine level expression for \it{Hamiltonian
Cycle} (HC), we should first design a TM that solves HC, then encode the
steps of the TM in an expression.

\bf{Structure level expressions} (SLE).
On the other hand, {structure level} logical expressions do not
express the \it{computation} of a property; rather, they directly express
the truth of a property of a mathematical structure.  \it{No
computation is involved in such a description}.

\it{Machine level} expressions for the class \bf{NP} (class \bf{P}) were provided
by Fagin \cite{fagin} (Gr\"{a}del \cite{gradel91}) respectively.

\subsection{Implications}

Since the structure level is indeed proven to be different from the machine
level, we can then state the following:
\begin{remark}
At the structure level, the validity of Fagin's theorem is completely
open --- it is unknown whether ESO sentences can characterize the class
\bf{NP} at this level.
$\hfill \Box$
\end{remark}

In the last 40 years, important contributions have been made at the
``machine level", such as Fagin's theorem and the Immerman-Vardi theorem.
However, these do not represent the full picture.

The formal mathematical definitions for
concepts such as MLE and SLE explain accurately as to why 
Theorems \ref{thm:esoHierarchy} and \ref{thm:esoHornYes} are correct in
their respective ``domains" (the machine computation domain and the
mathematical structure domain).

We now provide the formal definitions of MLE and SLE.

\subsection{Definitions of machine level and structure level}

Given a structure $\mc{A}$ as input, defined on a first order vocabulary
$\sigma$, we provide the following definitions:

\begin{definition}
\bf{Higher order predicates} are those that are not defined in the first
order vocabulary of the input structure.
\label{def:HOpred}
\end{definition}
\begin{definition}
(a) Let $\mc{E^L (P)}$  be the set of expressions in logic $\mc{L}$ such
that a structure $\mc{A}$ possesses property $\mc{P}$ iff 
$\mc{A} \models \mc{E}$ for every expression $\mc{E} \in \mc{E^L (P)}$. 

(b) We let $\mc{E}_m^{\mc{L}} (\mc{P})$ be the set of \bf{machine level
expressions} (MLE) for $\mc{P}$.
~Naturally, $\mc{E}_m^{\mc{L}} (\mc{P}) \subseteq \mc{E^L (P)}$.
~We define $\mc{E}_m^{\mc{L}} (\mc{P})$ as follows.

Let $T_i (\mc{P})$ be
Turing machines that have at least one halting computation iff
$\mc{A}$ possesses property $\mc{P}$ where $i \ge 1$.
~Hence when $\mc{A}$ satisfies $\mc{P}$, let $C_{i,j}$ be the $j^{th}$
halting computation of $T_i (\mc{P})$ where $j \ge 1$.
Over the set of all Turing machines \{$T_i (\mc{P})~|~i \ge 1\}$, the set
of halting computations is the set $\{C_{i,j}~|~i,j \ge 1\}$.
~Let $\mc{E} (C_{i,j})$ be an encoding of $C_{i,j}$ in $\mc{L}$.

Then $\mc{E}_m^{\mc{L}} (\mc{P})$  = $\{\mc{E} (C_{i,j})~|~i,j \ge 1\}$.
\newline
[We know from (a) that $\mc{A} \models \mc{E}$ for every $\mc{E} \in 
\mc{E}_m^{\mc{L}} (\mc{P})$.]

The predicates that encode machine computations are of higher order (cf.
Def. \ref{def:HOpred}).
\label{def:MLE}
$\hfill \Box$
\end{definition}
Above, we say ``at least one halting computation" to take into account
the fact that $T_i$ could be a non-deterministic TM.

Next, we define structure level expressions, as a subset of $\mc{E^L (P)}$.
\begin{definition}
Consider the set $\mc{E^L (P)}$ in Def. \ref{def:MLE}.
We let $\mc{E}_s^{\mc{L}} (\mc{P})$ be the set of \bf{structure level
expressions} (SLE) for $\mc{P}$.
~Naturally, $\mc{E}_s^{\mc{L}} (\mc{P}) \subseteq \mc{E^L (P)}$.
~We define $\mc{E}_s^{\mc{L}} (\mc{P})$ as follows:
For every expression $\mc{E} \in \mc{E}_s^{\mc{L}} (\mc{P})$,
\newline
(i) the higher order predicates of $\mc{E}$ range over tuples from the
universe of $\mc{A}$,  and 
\newline
(ii) the variables and first order predicate symbols in $\mc{E}$ are from $\sigma$.
\label{def:SLE}
$\hfill \Box$
\end{definition}

\subsection{Distinguishing MLE from an SLE}

Although Def. \ref{def:MLE} explains how to generate an MLE, it doesn't
tell us how to recognise an MLE when we see one; it doesn't tell us how
to distinguish an MLE from an SLE.
~This can be achieved quite easily since the vocabulary of the two
expression types are different.  For an SLE, the language consists of 
variables and relation symbols from the input vocabulary $\sigma$ of
structure $\mc{A}$ (as in Def. \ref{def:SLE}).

However for an MLE, the language consists of \bf{machine encodings} for
\newline
(i) the tape symbols of a TM;
\newline
(ii) the tape head movement and transitions of the TM; and
\newline
(iii) the variables and relation symbols in $\sigma$ as well as the
higher order predicate symbols used to encode the computation.

\section{Conclusion: Easier done than said}

For the class \bf{P}, it is a case of ``Easier done than said".  One can
let the computation of a decision problem run through a Turing machine
and obtain an expression in ESO universal Horn logic.  However,
\it{without} computation, ESO universal Horn is insufficient to capture
\bf{P} (we need a stronger logic), as the two proofs of Theorem
\ref{thm:esoHierarchy} show.

\section{Acknowledgements}
Nerio Borges, Anuj Dawar, Ian Hodkinson, Leonid Libkin and Anand Pillay.

This paper was presented at the Asian Logic Conference, Wellington, NZ,
in December 2011.

\appendix

\section{More on Definition \ref{def:approxProblem}}

Some readers who read a draft of this paper, after reading
Def. \ref{def:approxProblem}, remarked, ``These problems are very easy".

Well, I doubt if these problems can be categorised as ``very easy".
However, points to note include the following:
\begin{enumerate}
\item
A \it{feasible solution} means that the solution obeys the BFC (cf. Sec.
\ref{sec:bfcOfc}) --- feasibility here has nothing to do with the
objective function nor the approximation ratio;

\item
If $\mb{P} = \mb{NP}$, then $\ds K_\Q = 1$ and $\ds L_\Q$ has no meaning.

Conversely, if we can design an algorithm that solves $\ds A^T_{\Q}~ (K_\Q)$ in
polynomial time for $\ds K_\Q = 1$, then $\ds \mb{P} = \mb{NP}$;

\item
If the answer is YES (to either of the problems defined), then a
``feasible" solution $S$ that obeys the bound ($K_\Q$ or $L_\Q$) must be
returned;

\item
The approximation ratio is \it{not} given to us; we are only given a
\it{bound} on the approximation ratio ($K_\Q$ or $L_\Q$, as the case may be); 

\item
Hence, for the upper bound problem $A^T_\Q~(K_\Q)$, to answer the
question with a YES or NO within polynomial time, over all instances, we
should develop a polynomial time algorithm for $Q$ (which is not a
trivial task for many a problem $Q \in \mb{NP_{opt}}$);
and

\item
For a proven upper bound $K_\Q$, $A^T_\Q~(K_\Q)$ is \textsc{ptime}
solvable.  Similarly, for a proven lower bound $L_\Q$,
$A^B_\Q~(L_\Q)$ is NP-complete.
~Note that $1 \le L_\Q < K_\Q$. ~It is the gap between $L_\Q$ and $K_\Q$
that we seek to exploit.
$\hfill \Box$

\end{enumerate}

\section{More about the Core Argument on Page \pageref{page:coreArg}}

A reader raised a concern as to why should the expression for $A^T_{vc}$
consist of a sub-formula for the condition $\alpha (I,S) \le 2$.

The answer is, it is \it{not} a question of \bf{should}, but a question
of \bf{can}. 
~ Yes, we \it{can} express $A^T_{vc}$ as a conjunction of the
(sub-formulae) for the two conditions --- the feasibility condition, and
the approximation ratio condition. 

Once this is done, we can modify this to an expression for $A^B_{vc}$.

Note that the expressions in the proof of Lemma \ref{lem:esoHornNo} deal
purely with the mathematical structure alone.  No machine computation is
involved in developing the expressions.

\end{document}